\newtheorem{Theorem}{Theorem}
\newtheorem{Lemma}{Lemma}
\newtheorem{Proposition}{Proposition}
\newcommand{\remove}[1]{}
\begin{document}
%
\title{On  Optimal Ternary Locally Repairable Codes}

\author{\IEEEauthorblockN{Jie Hao\IEEEauthorrefmark{1}, Shu-Tao Xia\IEEEauthorrefmark{1}, and Bin Chen\IEEEauthorrefmark{1}\IEEEauthorrefmark{2}}
\IEEEauthorblockA{ \IEEEauthorrefmark{1}Graduate School at Shenzhen, Tsinghua University, Shenzhen, China\\
\IEEEauthorrefmark{2}School of Mathematical Sciences, South China Normal University, Guangzhou, China\\
Email: j-hao13@mails.tsinghua.edu.cn, xiast@sz.tsinghua.edu.cn, binchen14scnu@m.scnu.edu.cn}
}

\maketitle

\begin{abstract}
In an $[n,k,d]$ linear code, a code symbol  is said to have locality $r$ if it can be repaired  by accessing at most $r$ other code symbols.
For an $(n,k,r)$ \emph{locally repairable code} (LRC), the  minimum distance satisfies the well-known Singleton-like bound
$d\le n-k-\lceil k/r\rceil +2$.  In this paper, we study optimal ternary LRCs meeting this Singleton-like bound
 by employing a  parity-check matrix approach.
It is proved that there are only $8$  classes of possible parameters with which optimal ternary LRCs exist. Moreover, we obtain explicit constructions of optimal ternary LRCs for all these $8$ classes of parameters, where the minimum distance could only  be 2, 3, 4, 5 and 6.
\end{abstract}

\section{Introduction}
Recently, linear codes  with locality properties \cite{Gopalan,Papil} have attracted a lot of interest since their desirable applications in distributed storage systems.
Let $\mathbb{F}_q$ be a finite field with size $q$.
For a $q$-ary $[n,k,d]$ linear code with length $n$, dimension $k$ and minimum distance $d$, a code symbol with \emph{locality} $r$ means it can be repaired from at most $r$ other code symbols. In distributed storage systems, $r \ll k$ indicates low repair cost of a failed storage node.
An $(n,k,r)$ LRC is a $q$-ary $[n,k]$ linear code with locality $r$ for its code symbols. For an $(n,k,r)$ LRC  with locality for information symbols, Gopalan \emph{et al.} \cite{Gopalan} proved the following Singleton-like bound:
\begin{equation} \label{singleton-like-bound}
 d \leq n -k - \left\lceil {k}/{r} \right\rceil + 2,
\end{equation}
which reduces to the classical Singleton bound when $r=k$. For LRCs with locality for all code  symbols, Tamo \emph{et al.} \cite{Tamo-RS} gave another proof of the bound (\ref{singleton-like-bound}) by using results from graph theory. By a parity-check matrix approach, Hao and Xia \cite{Hao} analyzed $(n,k,r)$ LRCs with all symbol locality, where the bound (\ref{singleton-like-bound}) could also be naturally derived.

Linear codes with small field size are of special interest.  Many works have prosposed  constructions of optimal LRCs meeting the Singleton-like bound (\ref{singleton-like-bound}) over a relatively small field size.
Tamo \emph{et al.} \cite{Tamo-RS} proposed an elegant construction of optimal LRCs for $q \ge n+1$ by using polynomial methods. They further proposed optimal cyclic LRCs for $q \ge n+1$ in \cite{Tamo-cyclic}.
Ernvall \emph{et al.} proposed LRCs over a small alphabet in \cite{Ernvall}. Hao and Xia \cite{binaryLRC} proposed a class of optimal LRCs with $q \ge r-1, q = 2^m$ and $d=4$.
For optimal binary LRCs attaining the Singleton-like bound (\ref{singleton-like-bound}), Hao and Xia found all the possible four classes of optimal binary LRCs \cite{binaryLRC}.

Suppose $1\le k\le n-1$, $1\le r\le k-1$, and $\mathcal{C}$ is a $q$-ary $(n,k,r)$ LRC with all symbol locality throughout this paper. It is well known that nontrivial binary MDS codes meeting the Singleton bound do not exist, and the only binary MDS codes are binary $[n,1,n]$ and $[n,n-1,2]$ codes.
As for the ternary MDS codes, besides the trivial ternary $[n,1,n]$ and $[n,n-1,2]$ linear codes, the only possible one is the $[4,2,3]$ code. This could be seen from the following result.
\begin{Lemma}[\cite{MacWilliams}] \label{lemma-ternary-mds}
Let $\mathcal{C}$ be a $q$-ary $[n,k,d]$ MDS code. If $k \ge 2$, then $q \ge n-k+1$; If $k \le n-2$, then $q \ge k+1$.
\end{Lemma}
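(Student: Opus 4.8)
The plan is to reduce both statements to a single counting argument on a generator matrix in standard form, exploiting the self-duality of the MDS property. First I would recall the classical structural characterization: a $q$-ary $[n,k]$ code is MDS if and only if it has a generator matrix $G=[\,I_k \mid A\,]$ in which \emph{every} square submatrix of the $k\times(n-k)$ matrix $A$ is nonsingular (equivalently, any $k$ columns of $G$ are linearly independent). In particular every entry of $A$ is nonzero, and every $2\times 2$ minor of $A$ is nonzero.

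For the first claim, assume $k\ge 2$ (and $n>k$, the case $n=k$ being vacuous since then $q\ge 1$). I would look at the $2\times(n-k)$ submatrix $B$ consisting of the first two rows of $A$. Each of its $n-k$ columns has both coordinates nonzero, and any two of them span a nonsingular $2\times 2$ submatrix of $A$, hence are linearly independent. Scaling each column so its first coordinate equals $1$, the columns become $(1,b_1)^{\!\top},\dots,(1,b_{n-k})^{\!\top}$ with the $b_i$ pairwise distinct (by the linear independence) and each $b_i\neq 0$ (second coordinate nonzero). Thus $b_1,\dots,b_{n-k}$ are $n-k$ distinct elements of $\mathbb{F}_q\setminus\{0\}$, forcing $n-k\le q-1$, i.e.\ $q\ge n-k+1$.

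For the second claim I would invoke the fact that the dual of an MDS code is again MDS: if $\mathcal{C}$ is $[n,k,d]$ with $d=n-k+1$, then $\mathcal{C}^{\perp}$ is an $[n,\,n-k,\,k+1]$ MDS code, which follows from the same column-independence characterization applied to a parity-check matrix of $\mathcal{C}$. The hypothesis $k\le n-2$ is exactly the condition $n-k\ge 2$ for $\mathcal{C}^{\perp}$, so applying the first claim to $\mathcal{C}^{\perp}$ yields $q\ge n-(n-k)+1=k+1$.

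I do not expect any single step to be a real obstacle; the only care needed is in setting up the opening characterization precisely enough to justify both ``every entry of $A$ is nonzero'' and ``every $2\times 2$ minor of $A$ is nonzero,'' together with the MDS-duality statement. Once these standard facts are in place, the counting is immediate. As an alternative route, one may phrase everything projectively: an $[n,k]$ MDS code corresponds to an $n$-arc in $\mathrm{PG}(k-1,q)$, and the two bounds become the elementary estimates on the maximal size of such an arc.
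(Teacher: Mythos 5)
Your proof is correct. The paper itself gives no argument for this lemma---it is quoted directly from MacWilliams and Sloane \cite{MacWilliams}---so there is nothing internal to compare against; your route (the standard-form characterization $G=[\,I_k\mid A\,]$ with every square submatrix of $A$ nonsingular, the counting of the distinct nonzero ratios $b_i=a_{2i}/a_{1i}$ giving $n-k\le q-1$, and MDS duality to convert the first bound into the second) is essentially the classical textbook derivation of the bounds $n\le q+k-1$ and $q\ge k+1$ for MDS codes, and every step checks out, including the handling of the degenerate case $n=k$ and the observation that $k\le n-2$ is exactly the dimension-$\ge 2$ hypothesis for $\mathcal{C}^{\perp}$.
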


For $q=3$, since $k \le n-2$, $k \le q-1 = 2$; since $k=2$, $n \le q+k-1 = 4$. Hence the only possible nontrivial parameters are $n=4,k=2,d=3$.
\smallskip

In this paper, we study the constructions of optimal ternary LRCs  meeting the Singleton-like bound (\ref{singleton-like-bound}) and obtain the following main result.
\begin{Theorem}
\label{th-trivial}
Let $r\ge 1$, $k>r$ and $d\ge 2$. There are 8 classes of optimal ternary $(n,k,r)$ LRCs meeting the Singleton-like bound, whose parameters  are respectively
\begin{itemize}
\item $(k+\lceil k/r\rceil,k, r)$, $d=2$, $k>r\ge 1$;

\item $(13-g, 10-g, 8-g)$, $0 \le g \le 4$, $d=3$;

\item $(2k+2,k,1)$, $d=4$, $k\ge 2$;

\item $(8,2,1)$, $d=6$;

\item $(n,r+1,r)$, $d= n-k, 3\le k \le 6, 3 \le n-k \le 6,  n \le 12$;

\item $(4l,3l-2,3)$, $d=4$, $l\ge 3$;

\item $(3l,2l-1,2)$, $d=3$, $l\ge 3$;

\item $(12,5,2)$, $d=6$.

\end{itemize}
Except these 8 classes of LRCs, there is no other optimal ternary  $(n,k,r)$ LRC with $d=n-k-\lceil k/r\rceil +2$.
\end{Theorem}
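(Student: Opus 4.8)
The plan is to work throughout with parity-check matrices over $\mathbb{F}_3$. An optimal ternary $(n,k,r)$ LRC is represented by an $(n-k)\times n$ matrix $H$ over $\mathbb{F}_3$ in which $d$ equals the smallest size of an $\mathbb{F}_3$-linearly dependent set of columns, while the locality condition says that every coordinate lies in the support of some codeword of $\mathcal{C}^\perp$ (a row combination of $H$) of weight at most $r+1$; optimality is then the single equation $d=n-k-\lceil k/r\rceil+2$. So the entire statement reduces to deciding for which $(n,k,r,d)$ such an $H$ can exist over the ternary field, and for each of those exhibiting one explicitly. Building on the parity-check characterization of Hao and Xia, one first argues that an optimal LRC admits $H$ in a canonical ``local $+$ global'' form: there are $t=\lceil k/r\rceil$ designated local rows with pairwise disjoint supports $R_1,\dots,R_t$ of sizes $2\le|R_i|\le r+1$, together with $m:=n-k-t=d-2$ further rows forming a global block $G$; furthermore every coordinate must still receive its locality from a weight-$\le r+1$ dual codeword, so each coordinate lying outside $\bigcup_i R_i$ forces a low-weight relation among the rows of $G$ -- a condition that is highly restrictive over so small a field.

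Next I would bound $d$ and clear the degenerate cases. For $d=2$ the block $G$ is empty, which happens exactly when $n=k+\lceil k/r\rceil$, giving the first family. For $d\ge 3$ the lever is a ternary-MDS obstruction: collapsing each group $R_i$ to a single representative coordinate produces a ternary code whose parameters are constrained by $G$, and combining Lemma~\ref{lemma-ternary-mds} with elementary counting bounds on how many pairwise non-proportional columns are realizable over $\mathbb{F}_3$ forces $m=d-2\le 4$, hence $d\le 6$. The same bound has the crucial consequence that the global block has at most four rows, so only finitely many ``shapes'' of $G$ over $\mathbb{F}_3$ can occur, which is what makes an exhaustive analysis feasible.

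The core of the argument is then a case split on $d\in\{3,4,5,6\}$ refined by the local-size profile $(|R_1|,\dots,|R_t|)$. For $d=3$ (so $m=1$) a short counting argument built around the ternary Hamming code of length $13$ and its shortenings yields the family $(13-g,10-g,8-g)$, $0\le g\le 4$, while profiles with all $|R_i|=2$ feed the $d=3$ instances of $(n,r+1,r)$ and the family $(3l,2l-1,2)$. For $d=4$ the case $r=1$ (all groups of size $2$) produces $(2k+2,k,1)$, the case $r=3$ with all groups of size $4$ produces $(4l,3l-2,3)$, and the $t=2$ configurations produce the $d=4$ instances of $(n,r+1,r)$. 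For $d=5,6$ only the $(n,r+1,r)$ codes together with the sporadic $(8,2,1)$ and $(12,5,2)$ survive, the latter related to the ternary Golay code. In each surviving case one writes an explicit ternary $H$ -- identity-type blocks for the local part and a Vandermonde/Hamming/Golay-type block for $G$ -- and verifies its minimum distance, which settles the existence half of the theorem.

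The step I expect to be the main obstacle is exhaustiveness: proving that no admissible column configuration survives outside the eight listed families. This forces one to juggle three constraints simultaneously -- the local rows must have weight $\le r+1$; the global block must obey the ternary MDS/near-MDS restrictions of Lemma~\ref{lemma-ternary-mds}; and the columns must satisfy the non-proportionality and higher independence conditions imposed by the target distance $d$ -- all while $n$, $k$, $r$ range over infinitely many values. The bound $d\le 6$ is what tames this: with at most four global rows, each candidate shape of $G$ either extends to a single infinite parametric family (treated once and for all) or to only finitely many sporadic lengths (treated by direct inspection), and it is the careful bookkeeping of these finitely many shapes against the local-weight budget $r+1$ that constitutes the real work.
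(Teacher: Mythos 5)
Your overall philosophy (parity-check matrices over $\mathbb{F}_3$, a ``local $+$ global'' decomposition, reduce to finitely many shapes) is in the spirit of the paper, but your central structural claim is false, and the step you lean on to get finiteness is asserted rather than proved. The canonical form you posit --- exactly $t=\lceil k/r\rceil$ local rows with pairwise disjoint supports plus $m=n-k-t=d-2$ global rows --- does not hold for optimal ternary LRCs, including several codes in the statement itself. First, $t$ rows of weight at most $r+1$ cover at most $t(r+1)$ coordinates, so covering all $n$ coordinates generally needs $l\ge\lceil n/(r+1)\rceil>\lceil k/r\rceil$ locality rows; e.g.\ the $(2k+2,k,1)$, $d=4$ codes need $k+1$ disjoint weight-$2$ rows, and the $(8,2,1)$, $d=6$ code has \emph{all} $n-k$ rows serving as locality rows, so there is no ``global block of $d-2$ rows'' at all. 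Worse, disjointness itself can be impossible: for the $(13,10,8)$, $d=3$ code the parity-check matrix is the ternary Hamming matrix, whose dual (simplex) code has every nonzero codeword of weight $9$, so no two locality rows have disjoint supports; the paper's analysis of the $[4,2,3]$ branch shows that there the locality rows must pairwise intersect in at least two coordinates. Any case analysis organized by a disjoint-support profile $(|R_1|,\dots,|R_t|)$ therefore cannot be exhaustive and would in fact exclude codes that the theorem asserts exist.

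The second gap is the claim that Lemma~\ref{lemma-ternary-mds} plus ``elementary counting'' forces $m=d-2\le 4$, i.e.\ $d\le 6$, early in the argument. In the paper this bound is not a preliminary step but the outcome of the whole machinery: one first applies Lemma~\ref{lemma-binary-mds-lrc} (delete any $\lceil k/r\rceil-1$ locality rows and the covered columns; the residual code must be ternary MDS), so by Lemma~\ref{lemma-ternary-mds} only the types $[n',n'-1,2]$, $[4,2,3]$, $[n',1,n']$ occur; in the $[n',1,n']$ branch $d=n'$ is a priori unbounded, and ruling out large $d$ requires the Plotkin bound, the shortening inequality $M_3(n,d)\le 3M_3(n-1,d)$, the column-counting Lemma~\ref{lemma-l}, and --- in the $k=r+1$ subcase --- the near-MDS theory (generalized Hamming weight duality and the Dodunekov--Landgev bound $k\le 2q$, $n-k\le 2q$) to cap the $(n,r+1,r)$ family at $n\le 12$. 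Your sketch replaces all of this with a vague ``collapse each group to a representative coordinate'' step that is never justified and, as written, does not produce a code whose parameters are controlled by Lemma~\ref{lemma-ternary-mds}. Without a correct substitute for Lemma~\ref{lemma-binary-mds-lrc} and for the Plotkin/near-MDS arguments, both the bound $d\le 6$ and the exhaustiveness of your case split are unsupported, so the nonexistence half of the theorem is not established by your plan.
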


Moreover, for each of these $8$ classes of possible parameters, we propose explicit constructions.
The rest of the paper is organized as follows. In Section II, some preliminaries on parity-check matrix approaches are presented. In Section III, we try to determine all the possible parameters that the optimal ternary LRCs can have, where the most complicated case is put in Section IV for clearance. Section V concludes the paper.

\section{Preliminaries}
For a $q$-ary $[n,k,d]$ linear code with length $n$, dimension $k$ and minimum distance $d$, the classical Singleton bound \cite{MacWilliams} says that $d\le n-k+1$ and the codes meeting it are called maximum distance separable (MDS) codes.
The \emph{support} of a vector is the set of coordinates of its non-zero components. If a coordinate is in the support of a vector, it is said to \emph{be covered} by the vector.
Let $A\otimes B $ be the Kronecker product of matrices. Let $I_m$ denote the $m \times m$ identity matrix.

Hao and Xia \cite{Hao} set up a new characterization of LRCs from the view of parity-check matrices, and then give an alternative proof of the bound (1). Let us  briefly  recall some of their results.

Let $\mathcal{C}$ be a $q$-ary $(n,k,r)$ LRC with minimum distance $d$.
In \cite{Hao}, $n-k$ parity-check equations are selected from the dual code $\mathcal{C}^\bot$ to form a full rank parity-check matrix $H$,
which is divided into two parts $(H_1^T, H_2^T)^T$. The rows in the upper part $H_1$, or \emph{locality-rows}, cover all coordinates and
ensure locality, while the rows in the lower part $H_2$ impact the minimum distance. Firstly, for the first coordinate, select a parity-check
equation with weight at most $r+1$ to cover it; then, for the first uncovered coordinate, select another parity-check equation with weight at
most $r+1$ to cover it; repeating the procedure iteratively, all coordinates are covered and $H_1$ is constructed. Let $l$ be the number of
rows of $H_1$ (or the number of locality-rows), then \cite{Hao}
\begin{eqnarray}
\label{krl}
\lceil k/r\rceil\le \lceil n/(r+1)\rceil\le l\le n-k.
\end{eqnarray}
Finally, some $n-k-l$ independent parity-check equations are selected to form $H_2$, and the construction of $H$ completes.

By deleting a fixed row of $H$ and all the columns it covered, we obtain a submatrix, say $H'$. It is clear that $d$ is upper bounded by the
minimum distance of the linear code with parity-check $H'$. Performing the above deleting procedure step by step for any fixed
$\left\lceil{k}/{r}\right\rceil-1$ locality-rows in $H_1$, we have
\begin{equation} \label{general-bound-d}
d \le \min_{ 1 \leq t \leq \left\lceil\frac{k}{r}\right\rceil-1} d^{(q)}_{\rm opt}\big (n  - t(r+1),k-tr\big),
\end{equation}
where $d_{\text{opt}}^{(q)}(n^*,k^*)$ is the largest possible minimum distance of a $q$-ary $[n^*,k^*]$ linear code.
When $t = \left\lceil{k}/{r}\right\rceil-1$ and invoking the classical Singleton bound $d_{\text{opt}}^{(q)} (n  - t(r+1),k-tr)\le n-k-t+1$,  the bound (\ref{general-bound-d}) reduces to the Singleton-like bound (\ref{singleton-like-bound}). When $\mathcal{C}$ is optimal, the following lemmas hold.

\begin{Lemma} [\cite{binaryLRC}] \label{lemma-binary-mds-lrc}
Let $\mathcal{C}$ be a $q$-ary $(n,k,r)$ LRC with $d=n-k-\lceil{k}/{r}\rceil+2$ and $H$ be its parity-check matrix described in the above procedures. Let $H'$ be the $m'\times n'$ matrix obtained from $H$ by deleting  any fixed  $\left\lceil{k}/{r}\right\rceil-1$ locality-rows and all the columns whose coordinates are covered by the supports  of these $\left\lceil{k}/{r}\right\rceil-1$ locality-rows. Then $H'$ has full rank and the $[n',k',d']$ linear code $\mathcal{C}'$ with the parity-check matrix $H'$ is a $q$-ary MDS code.
\end{Lemma}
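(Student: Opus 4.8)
The plan is to recognize $\mathcal{C}'$ as a shortening of $\mathcal{C}$ and then force a short chain of inequalities to collapse using the optimality hypothesis. Set $t=\lceil k/r\rceil-1$, let $\rho_1,\dots,\rho_t$ be the $t$ deleted locality-rows, and let $S=\operatorname{supp}(\rho_1)\cup\dots\cup\operatorname{supp}(\rho_t)$ be the set of deleted coordinates, so that $H'$ has $m'=n-k-t$ rows and $n'=n-|S|$ columns. Since each locality-row has weight at most $r+1$ we have $|S|\le t(r+1)$, and because $t=\lceil k/r\rceil-1<k/r$ forces $tr<k$, this already gives $n'\ge n-t(r+1)=n-tr-t>n-k-t$.

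First I would check that $d'\ge d$, where $d=d(\mathcal{C})$. Given a nonzero $\mathbf{c}'\in\mathcal{C}'=\ker H'$, pad it with zeros on the coordinates in $S$ to obtain $\mathbf{c}\in\mathbb{F}_q^{n}$. Every retained row of $H$ acts on $\mathbf{c}$ exactly as the corresponding row of $H'$ acts on $\mathbf{c}'$, hence gives $0$; and every deleted row $\rho_i$ has $\operatorname{supp}(\rho_i)\subseteq S$, so it too gives $0$ against $\mathbf{c}$. Thus $\mathbf{c}\in\mathcal{C}$, $\mathbf{c}\ne 0$, and $\operatorname{wt}(\mathbf{c})=\operatorname{wt}(\mathbf{c}')$; consequently $d'\ge d$. (Equivalently, $\mathcal{C}'$ is $\mathcal{C}$ shortened at $S$, which is nonzero by the length estimate above, so $d'$ is well defined, and shortening never decreases the minimum distance.)

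Next I would bound the redundancy of $\mathcal{C}'$. Since $H$ has full row rank $n-k$, deleting $t$ rows leaves a matrix of rank exactly $n-k-t$, and deleting columns afterwards can only lower the rank; hence $n'-k'=\operatorname{rank}(H')\le n-k-t$. Feeding this, the previous step, and the classical Singleton bound on $\mathcal{C}'$ into a single chain yields
\begin{equation*}
d\ \le\ d'\ \le\ n'-k'+1\ \le\ n-k-t+1\ =\ n-k-\lceil k/r\rceil+2\ =\ d,
\end{equation*}
the last equality being precisely the hypothesis that $\mathcal{C}$ meets the Singleton-like bound. Hence all four inequalities are equalities: $d'=n'-k'+1$ shows $\mathcal{C}'$ is a $q$-ary MDS code, and $n'-k'=n-k-t$ together with $n'>n-k-t$ shows $\operatorname{rank}(H')=n-k-t=m'$, i.e. $H'$ has full rank. (This is just the $t=\lceil k/r\rceil-1$ instance of the deletion procedure behind (\ref{general-bound-d}), run in reverse under the equality assumption.)

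Most of this is bookkeeping; the one place calling for a little care is the padding step of the second paragraph, where it is essential that the deleted columns are exactly those covered by the deleted locality-rows — this is what makes the deleted parity checks automatically satisfied by $\mathbf{c}$ — while the possible overlaps among the supports $\operatorname{supp}(\rho_i)$ do no harm, being absorbed into the crude estimate $|S|\le t(r+1)$, which is all one needs to keep $\mathcal{C}'$ from collapsing to $\{0\}$.
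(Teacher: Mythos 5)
Your proof is correct and takes essentially the same route as the paper: the paper imports this lemma from \cite{binaryLRC}, and its own supporting discussion (delete $t=\lceil k/r\rceil-1$ locality-rows together with the columns they cover, note that shortening cannot decrease the distance, then apply the Singleton bound as in the derivation of (\ref{general-bound-d})) is precisely the equality-forcing chain you spell out. Your explicit checks of the padding step and of $n'>n-k-t$ (so that $\mathcal{C}'\neq\{0\}$ and full rank follows) are exactly the details the cited argument relies on.
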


\begin{Lemma}[\cite{Hao}]\label{necessary-condition}
For an $(n,k,r)$ LRC with $d = n-k-  \left\lceil{k}/{r}\right\rceil + 2$, suppose $r \mid k$, then $(r+1) \mid n$ and the supports of the locality-rows in the parity-check matrix must be pairwise disjoint, and each has weight exactly $r+1$.
\end{Lemma}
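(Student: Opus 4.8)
The plan is to derive all three conclusions from a single \emph{over-deletion} argument applied to the parity-check matrix $H$. Write $s=\lceil k/r\rceil=k/r$, using $r\mid k$, and recall from (\ref{krl}) that the number of locality-rows satisfies $l\ge s$. Let $h_1,\dots,h_l$ be the locality-rows, with supports $S_1,\dots,S_l$, each of weight at most $r+1$, together covering all $n$ coordinates. The engine of the proof is the elementary fact underlying the bound (\ref{general-bound-d}): if we delete any set of locality-rows together with every column they cover, then any codeword of the residual code (the code whose parity-check is the resulting submatrix) extends, by placing zeros on the deleted coordinates, to a codeword of $\mathcal{C}$ of the same weight, so $d$ is upper bounded by the minimum distance of the residual code. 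I would first verify this extension step cleanly, the only point to check being that each deleted locality-row vanishes outside the deleted columns, since the deleted column set contains its support.

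The key step is to apply this not with $s-1$ deletions (which merely recovers the Singleton-like bound) but with exactly $s$ deletions. Fix any $s$ locality-rows, let $U$ be the union of their supports, and put $c=|U|$. The residual matrix $H^{\ast}$ then has $n-c$ columns and exactly $(n-k)-s$ rows, so its rank is at most $(n-k)-s$ and the residual code $\mathcal{C}^{\ast}$ has minimum distance at most $(n-c)-\dim\mathcal{C}^{\ast}+1=\operatorname{rank}(H^{\ast})+1\le (n-k)-s+1=d-1$. Since $d\le d(\mathcal{C}^{\ast})$ whenever $\mathcal{C}^{\ast}\neq 0$, the only way to avoid the contradiction $d\le d-1$ is $\dim\mathcal{C}^{\ast}=0$, i.e. $H^{\ast}$ has full column rank, which forces $n-c\le (n-k)-s$, that is $c\ge k+s$. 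On the other hand $c=|U|\le s(r+1)$ trivially. Here is where $r\mid k$ enters decisively: it makes $k+s=s(r+1)$, so the two bounds coincide and $c=s(r+1)$ exactly. Equality in $|U|\le\sum|S_i|\le s(r+1)$ then forces each of the chosen $s$ supports to have weight exactly $r+1$ and to be pairwise disjoint.

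Finally I would promote this local conclusion to all $l$ locality-rows. Since $s=k/r\ge 2$ (because $k>r$ and $r\mid k$), every single row lies in some $s$-subset and every pair of rows lies in a common $s$-subset; applying the previous paragraph to these subsets shows that every $S_i$ has weight exactly $r+1$ and that all the supports are pairwise disjoint. As the locality-rows cover all $n$ coordinates, we obtain a partition $[n]=\bigsqcup_{i=1}^{l}S_i$ into $l$ blocks of size $r+1$, whence $n=l(r+1)$ and in particular $(r+1)\mid n$. I expect the main obstacle to be the over-deletion step: one must argue carefully that the residual code is forced to be trivial rather than merely estimating its distance, and that the argument is valid for an \emph{arbitrary} choice of $s$ locality-rows so that the weight and disjointness conclusions propagate to the whole set. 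The hypothesis $r\mid k$ is exactly what closes the gap $k+s\le c\le s(r+1)$ into an equality.
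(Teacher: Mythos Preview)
The paper does not prove Lemma~\ref{necessary-condition}; it is quoted from \cite{Hao} without argument, so there is no in-paper proof to compare against. That said, your proof is correct and fits cleanly into the parity-check matrix framework of Section~II. The over-deletion idea---deleting $s=\lceil k/r\rceil$ locality-rows instead of $s-1$, forcing the residual code to be trivial, and then reading off the equality $|U|=s(r+1)$---is exactly the natural strengthening of the argument behind (\ref{general-bound-d}) and Lemma~\ref{lemma-binary-mds-lrc}. Your use of $r\mid k$ to close the chain $k+s\le c\le s(r+1)$ into an equality is precisely the point, and the propagation step via $s\ge 2$ (which holds because $k>r$ and $r\mid k$ force $k\ge 2r$) is sound. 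One small remark: the claim $\operatorname{rank}(H^\ast)\le (n-k)-s$ relies on $H$ having exactly $n-k$ rows, i.e.\ on the locality-rows being linearly independent; this is implicit in the construction of $H$ in the preliminaries (the full $H$ has rank $n-k$ with $n-k$ rows), but it is worth stating explicitly.
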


\section{Optimal Ternary LRCs  Meeting the Singleton-like Bound}

In this section, we will determine the parameters of  optimal ternary LRCs meeting the Singleton-like bound (\ref{singleton-like-bound}). It is proved that there exist only $8$ classes of possible parameters. For each class of parameter, we give explicit constructions.

The analysis procedure is similar to \cite{binaryLRC}, where all the binary optimal LRCs are enumerated. However, the ternary case is much more complicated, in which many new techniques are used to determine all the possible parameters.

Let $\mathcal{C}$ be an optimal ternary $(n,k,r)$ LRC with $d=n-k-\lceil{k}/{r}\rceil+2$ and $H$ be its parity-check matrix described in the premilaries.
The next result follows from Lemma \ref{lemma-binary-mds-lrc}.
\begin{Proposition} \label{co1}
Let $H'$ be the $m'\times n'$ matrix obtained from $H$ by deleting  any fixed  $\left\lceil{k}/{r}\right\rceil-1$ locality-rows and all the columns they covered. Then $H'$ has to be a full rank parity-check matrix of a ternary $[n',n'-1,2]\;(n'\ge 2)$ or $[4,2,3]$ or $[n',1,n'] \;(n'\ge 3)$ linear code.
\end{Proposition}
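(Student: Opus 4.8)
The plan is to combine Lemma 2 with the classification of ternary MDS codes already recorded in the excerpt. By Lemma 2, after deleting any fixed $\lceil k/r\rceil-1$ locality-rows of $H$ together with all columns covered by their supports, the resulting $m'\times n'$ matrix $H'$ has full rank and is the parity-check matrix of a ternary $[n',k',d']$ MDS code with $k'=n'-m'$. So the task reduces to enumerating which ternary MDS codes can occur, i.e. applying the discussion immediately following Lemma \ref{lemma-ternary-mds}.

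First I would recall from that discussion (which invoked Lemma \ref{lemma-ternary-mds} with $q=3$) that the only ternary MDS codes are: the trivial $[n',1,n']$ codes for $n'\ge 1$, the trivial $[n',n'-1,2]$ codes for $n'\ge 2$, and the single nontrivial $[4,2,3]$ code. One should also dispose of the degenerate cases $n'=1$ (forced $k'=1$, $d'=1$, which cannot satisfy $d=n-k-\lceil k/r\rceil+2\ge 2$ after the deletions, since $d\le d'$) and note that $[n',n',1]$ and $[n',0,\,\cdot\,]$ do not arise because $H'$ has full rank $m'=n'-k'$ with $1\le k'\le n'-1$ — here I use $k'\ge 1$ because the code $\mathcal C'$ is obtained by puncturing/shortening a code of dimension $k-(\lceil k/r\rceil-1)r\ge 1$, and $k'\le n'-1$ because $H'$ is a genuine (nonempty, full-rank) parity-check matrix. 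Assembling these observations, the three surviving possibilities are exactly $[n',n'-1,2]$ with $n'\ge 2$, $[4,2,3]$, and $[n',1,n']$ with $n'\ge 3$ (the case $n'=2$ in the last family coincides with the first family, and $n'=1$ is excluded as above), which is the claimed statement.

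The main obstacle — really the only nonroutine point — is arguing carefully that $k'$ lies strictly between $0$ and $n'$, so that the parameter pairs $(n',0)$ and $(n',n')$ (which would make "MDS" vacuous or make $H'$ empty) genuinely cannot occur; this needs the full-rank assertion of Lemma \ref{lemma-binary-mds-lrc} and the bound $k-(\lceil k/r\rceil-1)r\ge 1$ coming from $r\ge 1$ and the definition of the ceiling. Everything else is a direct quotation of the ternary MDS classification already established before Lemma \ref{lemma-ternary-mds}'s corollary-style remark, so the proof is short: state Lemma \ref{lemma-binary-mds-lrc}, quote the list of ternary MDS codes, rule out the degenerate dimensions, and read off the three cases.
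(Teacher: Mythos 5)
Your proposal is correct and follows essentially the same route as the paper: the paper simply states that the proposition follows from Lemma \ref{lemma-binary-mds-lrc} (full rank of $H'$ and the MDS property of $\mathcal{C}'$) combined with the classification of ternary MDS codes via Lemma \ref{lemma-ternary-mds} given in the introduction. Your extra care in ruling out the degenerate dimensions $k'=0$, $k'=n'$ and the overlap/exclusion at $n'=1,2$ is a fine elaboration of the same argument, not a different approach.
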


Now, we discuss each of these three classes of ternary LRCs meeting the Singleton-like bound.

\subsection{$H'$ Corresponds to a Ternary $[n',n'-1,2]\;(n'\ge 2)$ Code}
Clearly, $H'$ is a row vector with all the entries being $1$ or $2$. Moreover, since $\lceil k/r\rceil -1<l$ which is the number of locality-rows, $H'$ has a locality-row with weight at most $r+1$. This implies that $H'$ has to be a row vector with length at most $r+1$, or $n'\le r+1$. Since $n'\ge 2$, $d'=d=2$ and $n=k+\lceil k/r\rceil$. Hence, $\mathcal{C}$ must be a ternary $[k+\lceil k/r\rceil,k,2]$ linear code with locality $r$. Moreover, by (\ref{krl}), $\lceil k/r\rceil = l = n-k$, which implies that $H$ consists of only locality-rows.

If $r\mid k$, then $n=(r+1)k/r$, $n-k=k/r=l$, and all $k/r$ rows of $H$ must have uniform weight $r+1$ and pairwise disjoint supports.
Then, in the sense of the equivalence, the corresponding optimal ternary $(k+k/r,k,r)$ LRC must have the following parity-check matrix.
\begin{equation}
\label{h31}
H= \Big(I_{\frac{k}{r}} \otimes (\underbrace{1,1,\ldots,1}_{r+1})\Big)_{\frac{k}{r}\times \frac{(r+1)k}{r}}\;,
\end{equation}
e.g., if $n=9,k=6,r=2$, its parity-check matrix is
\begin{equation*}
H=
\left(
  \begin{array}{ccccccccc}
    1 & 1 & 1 & 0 & 0 & 0 & 0 & 0 & 0 \\
    0 & 0 & 0 & 1 & 1 & 1 & 0 & 0 & 0 \\
    0 & 0 & 0 & 0 & 0 & 0 & 1 & 1 & 1 \\
  \end{array}
\right).
\end{equation*}

If $r\nmid k$, then $r\ge 2$. Let $k=sr+t$, where $1\le t\le r-1$, then $\lceil \frac{k}{r}\rceil = s+1$, $n=k+\lceil \frac{k}{r}\rceil=(r+1)\lceil \frac{k}{r}\rceil-(r-t)$, where $1\le r-t\le r-1$. Let $\hat H$ be a  $\lceil \frac{k}{r}\rceil\times (r+1)\lceil \frac{k}{r}\rceil$ matrix in (\ref{h31}), where $\frac{k}{r}$ is changed to $\lceil \frac{k}{r}\rceil$.
\begin{eqnarray}
&&\mbox{$H$ is a $\lceil \frac{k}{r}\rceil\times (k+\lceil \frac{k}{r}\rceil)$ matrix obtained from $\hat H$}\nonumber\\
&&\qquad \mbox{by deleting any $r-t$ columns of $\hat H$, such}\nonumber\\
&&\qquad \mbox{that at least 1 row of $H$ has weight $r+1$;}
\qquad\label{h32} \\
&&\mbox{$\underline H$ is obtained from $H$ by substituting at most $r-t$}\nonumber\\
&&\qquad \mbox{0's of $H$ to 1's or 2's, such that the weight of each}\nonumber\\
&&\qquad \mbox{row of $\underline H$ is at most $r+1$.\nonumber}
\end{eqnarray}
Then, in the sense of the equivalence, every $(k+\lceil k/r\rceil,k,r)$ LRC with $d=2$ must have parity-check matrix as $H$ or $\underline H$, e.g., if $n=10,k=7,r=3$, its parity-check matrix is
\begin{equation*}
H=
\left(
  \begin{array}{ccccccccccc}
   1 & 1 & 1 & 1 & 0 & 0 & 0 & 0 & 0 & 0 \\
   0 & 0 & 0 & 0 & 1 & 1 & 1 & 1 & 0 & 0 \\
   0 & 0 & 0 & \underline{0} & 0 & 0 & \underline{0} & \underline{0} & 1 & 1 \\
  \end{array}
\right),
\end{equation*}
where any one or two of the three zeros with underline can be substituted to $1$ or $2$, and $\underline H$ is thus obtained.

Combining the above analysis in this subsection, we have the following lemma.
\begin{Lemma} \label{lemma-n-n-1-2}
When $H'$ corresponds to a ternary $[n',n'-1,2]$ $(n'\ge 2)$ code, optimal LRCs must have parameters $(k+\lceil k/r\rceil,$ $k,r,d=2)$. The parity-check matrices in (\ref{h31}) and (\ref{h32}) give respectively the constructions of $r\mid k$ and $r\nmid k$.
\end{Lemma}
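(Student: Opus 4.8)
The plan is to first pin down the parameters, then to determine $H$ up to code equivalence in the two cases $r\mid k$ and $r\nmid k$. Since $H'$ is the parity-check matrix of a ternary $[n',n'-1,2]$ code, it consists of a single row ($m'=1$), that row has no zero entry (otherwise $\mathcal C'$ would contain a weight-one codeword), and $d'=2$. Because $H'$ is obtained from the $(n-k)\times n$ matrix $H$ by deleting $\lceil k/r\rceil-1$ rows, $m'=1$ forces $n-k-\lceil k/r\rceil+1=1$, i.e.\ $n=k+\lceil k/r\rceil$ and $d=d'=2$. Substituting $n-k=\lceil k/r\rceil$ into $\lceil k/r\rceil\le l\le n-k$ from (\ref{krl}) gives $l=\lceil k/r\rceil=n-k$, so $H$ has no lower part: $H=H_1$ is exactly $l$ locality-rows, each of weight at most $r+1$, whose supports together cover all $n$ coordinates. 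This proves the parameter claim, and it remains only to describe $H$.

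For $r\mid k$, Lemma \ref{necessary-condition} applies: $(r+1)\mid n$ and the $l=k/r$ locality-rows have pairwise disjoint supports, each of weight exactly $r+1$. Thus $H$ is block diagonal with $k/r$ blocks, the $i$-th being a $1\times(r+1)$ vector with entries in $\{1,2\}$. Rescaling each column by the inverse of its unique nonzero entry is a monomial transformation; it preserves code equivalence and the locality structure and turns every block into $(1,\dots,1)$, after which a column permutation yields (\ref{h31}). Conversely (\ref{h31}) plainly defines a ternary $[k+k/r,k,2]$ code with locality $r$, which meets the bound and hence is optimal.

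For $r\nmid k$ we have $r\ge 2$; writing $k=sr+t$ with $1\le t\le r-1$ gives $\lceil k/r\rceil=s+1$ and $n=(r+1)(s+1)-(r-t)$, with $1\le r-t\le r-1$. Two facts drive the analysis: (i) the $s+1$ locality-rows cover all $n$ coordinates with total weight at most $(s+1)(r+1)=n+(r-t)$, so the coordinates lying in two or more locality-row supports contribute total multiplicity at most $r-t$; and (ii) applying the first paragraph's argument to each choice of the single retained row shows that the restriction of every locality-row to the coordinates it alone covers is a nonzero vector of length between $2$ and $r+1$. Using (i) and (ii) together with column/row rescalings over $\mathbb F_3$ and column permutations, one shows that, up to equivalence, $H$ either arises from the $(s+1)\times(r+1)(s+1)$ block-all-ones matrix $\hat H$ by deleting $r-t$ columns so that one row stays full --- the matrix $H$ of (\ref{h32}) --- or arises from such an $H$ by filling at most $r-t$ of its zeros with entries in $\{1,2\}$ without exceeding row-weight $r+1$ --- the matrix $\underline H$ of (\ref{h32}); and conversely each such matrix defines a ternary $[k+\lceil k/r\rceil,k,2]$ code with locality $r$, hence an optimal LRC. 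The main obstacle is exactly this last normalization: because the locality-row supports need not be disjoint, columns cannot be scaled independently, so one must carefully track the at most $r-t$ shared coordinates --- organizing the argument around the private-part vectors from (ii) --- to conclude that all surviving entries may be taken equal to $1$.
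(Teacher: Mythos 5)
Your route is the paper's route: from $m'=1$ you get $n=k+\lceil k/r\rceil$, $d=2$ and $l=\lceil k/r\rceil=n-k$, so $H$ consists of locality-rows only; the $r\mid k$ case is settled by Lemma \ref{necessary-condition} plus column rescaling, exactly as in the text; and for $r\nmid k$ your counting facts (total over-coverage at most $r-t$, each row having at least $t+1\ge 2$ privately covered coordinates) are the right skeleton behind the form (\ref{h32}), which the paper itself states with no more detail than you give.

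One point needs correcting, though: your closing claim that the normalization should ``conclude that all surviving entries may be taken equal to 1'' is false if read literally, and it contradicts both (\ref{h32}) and your own statement of the normal form two sentences earlier. Column and row rescalings let you make every entry of a row equal to $1$ on the coordinates that row alone covers (scale those columns freely, then scale the row), but entries sitting on shared coordinates cannot all be forced to $1$: if two locality-rows overlap in two coordinates (possible as soon as $r-t\ge 2$), the cross-ratio $H_{1,a}H_{2,b}\big/\bigl(H_{1,b}H_{2,a}\bigr)$ of the $2\times 2$ overlap block is invariant under all monomial transformations, so the configurations with overlap entries $(1,1)$ and $(1,2)$ are genuinely inequivalent --- e.g.\ for $(10,7,3)$, $d=2$, the two choices give dual codes with different numbers of weight-$4$ codewords, yet both are optimal LRCs. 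This is precisely why $\underline H$ in (\ref{h32}) allows the substituted entries to be $1$ \emph{or} $2$. So the normalization you should aim for (and which does go through by choosing, for each shared column, one ``owner'' row whose entry is scaled to $1$) is: all-ones on each row's own block, with the at most $r-t$ overlap entries left arbitrary in $\{1,2\}$; as stated, your final step would fail and would omit some optimal codes.
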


\subsection{$H'$ Corresponds to a Ternary $[4,2,3]$ Code}

 $H'$ is a full rank parity-check matrix of a $[4,2,3]$ linear code. In the meaning of equivalence, $H'$ has to be
   \begin{equation}
H'=
\left(
  \begin{array}{ccccccccc}
    0 & 1 & 1 & 1 \\
    1 & 0 & 1 & 2
  \end{array}
\right).
\end{equation}
Then $m' =n - k - \left \lceil {k}/{r}\right  \rceil +1  = 2$, $n' =4$, $d'=d=3$. Thus $n = k+ \left \lceil {k}/{r}\right  \rceil +1$.  By (\ref{krl}), $\lceil k/r\rceil \le  l  \le n-k = \left \lceil {k}/{r}\right  \rceil +1$. Hence, the number of locality-rows $ l = \lceil k/r\rceil$ or $ l = \lceil k/r\rceil+1$.

\textbf{Case 1:} $ l = \lceil k/r\rceil$. When we obtain $H'$ by deleting $\left \lceil {k}/{r}\right  \rceil-1$ locality-rows, the remaining two rows must contain a locality-row which covers all the remaining 4 coordinates. However, the $[4,2,3]$ MDS codes with generator matrix $H'$ has weight distribution $A_0 = 1, A_2=0, A_3=8, A_4=0$, which implies that there does not exist a row which covers all the remaining 4 coordinates. This leads to a contradiction.

\medskip
\textbf{Case 2:} $ l = \lceil k/r\rceil +1$. Since $l=n-k$, all the rows in $H$ are locality-rows. Clearly, the two locality-rows corresponding to $H'$ intersect on at least two coordinates. Since we could delete arbitrary $\lceil k/r\rceil -1$ locality-rows, any two locality-rows intersect on at least two coordinates. Consider the first row and the last row of $H$, where the first $\lceil k/r\rceil -1$ rows will be deleted and $H'$ is left. They intersect on at least two coordinates which could not appear in $H'$. This implies that the last row of $H$ has weight at least 5, thus
\begin{eqnarray}
r\ge 4. \label{rge4}
\end{eqnarray}
Since the support of locality-rows are intersected, by Lemma \ref{necessary-condition}, $r \nmid k$. Let $k=sr+t$,
where $1\le t\le r-1$. Let $\gamma$ be the number of the columns covered by the supports of the deleted $\left \lceil {k}/{r}\right  \rceil-1$ locality-rows. Then \begin{eqnarray*}
&\gamma = n - n' = k + \left \lceil {k}/{r}\right  \rceil -3 \le (\lceil k/r\rceil -1)(r+1),&\\
\mbox{or}\quad
&k-\lceil k/r\rceil r+r\le 2.&
\end{eqnarray*}
By substituting $k= sr+t$, we have that $t = 1$ or $t = 2$.

  If $t=1$, $\gamma=k+\lceil k/r\rceil-3=s(r+1)-1$. Since we have delete $\lceil k/r\rceil -1= s$ locality-rows each of which has weight at most $r+1$, these deleted $s$ rows must intersect on exactly one coordinate. Since that any two locality-rows intersect on at least two coordinates, $\lceil k/r\rceil -1= s =1$, and this row has weight $r$. Since the $\lceil k/r\rceil -1= 1$ deleted row could be arbitrarily chosen, every locality-row has weight $r$, i.e., the code has locality $r-1$. This contradicts with the fact that the code has locality $r$.

  If $t=2$, $\gamma=k+\lceil k/r\rceil-3=s(r+1)$. Since we have deleted $\lceil k/r\rceil -1= s$ locality-rows each of which has weight at most $r+1$, these deleted $s$ rows must have disjoint support and have uniform weight exactly $r+1$. Since any two locality-rows intersect on at least two coordinates, we have $s=1$. Thus, $k=r+2$, $n=r+5$, $l=n-k=d=3$, which implies that the columns of $H$ have to be pairwise independent. Hence, the number of such columns $n\le \frac{q^{3}-1}{q-1}=13$. By (\ref{rge4}), $n=13,12,11,10,9$. In fact, they are a ternary Hamming code and its shortened versions, which lead to the following results.

\begin{Lemma} \label{lemma-423}
When $H'$ corresponds to a ternary $[4,2,3]$ code, optimal LRCs must have parameters $(13,10,8)$ or $(12,9,7)$ or $(11,8,6)$ or $(10,7,5)$ or $(9,6,4)$, all with $d=3$. By puncturing respectively the first $0,1,2,3,4$ columns from
\begin{equation}
\label{13-10-8}
\left(
  \begin{array}{ccccccccccccc}
  1 & 1 & 1 & 1 & 1 & 1 & 1 & 1 & 1 & 0 & 0 & 0 & 0 \\
  1 & 1 & 2 & 2 & 2 & 1 & 0 & 0 & 0 & 0 & 1 & 1 & 1 \\
  1 & 2 & 1 & 2 & 0 & 0 & 1 & 2 & 0 & 1 & 0 & 1 & 2
  \end{array}
\right),
\end{equation}
constructions of parity-check matrices are obtained.
\end{Lemma}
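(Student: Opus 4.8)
The plan is to read the statement as two parts, a parameter classification and an explicit realisation, and to observe that the classification has already been carried out in the Case~1--Case~2 discussion preceding the lemma. Recall that discussion: $H'$ of type $[4,2,3]$ forces $l=\lceil k/r\rceil$ or $l=\lceil k/r\rceil+1$; the first is impossible because the $[4,2,3]$ code has no weight-$4$ word, and the second forces all rows of $H$ to be locality-rows whose supports pairwise intersect, hence (by Lemma~\ref{necessary-condition}) $r\nmid k$, and writing $k=sr+t$ a counting of the columns covered by $\lceil k/r\rceil-1$ deleted locality-rows gives $t\in\{1,2\}$; the case $t=1$ collapses the locality, and $t=2$ forces $s=1$. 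So $k=r+2$, $n=r+5$, $l=n-k=d=3$, the columns of $H$ are pairwise independent nonzero vectors in $\mathbb{F}_3^{3}$, and $r\ge 4$ by (\ref{rge4}); since there are only $(3^{3}-1)/(3-1)=13$ such columns, $n\le 13$, leaving exactly the five tuples $(13,10,8),(12,9,7),(11,8,6),(10,7,5),(9,6,4)$, all with $d=3$. What is left to prove is that each tuple is actually attained.

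For that I would simply verify the matrix $H$ in (\ref{13-10-8}). First I would check, over $\mathbb{F}_3$, that $H$ has rank $3$ (columns $9,10,11$ already form a permutation of the identity) and that its $13$ columns are pairwise linearly independent and nonzero, so that $H$ runs through all $13$ points of $PG(2,3)$ and is therefore a parity-check matrix of the ternary $[13,10,3]$ Hamming code, whose dual is the $[13,3]$ simplex code in which every nonzero word --- in particular every row of $H$ --- has weight $3^{2}=9$. Consequently each coordinate is covered by a weight-$9$ row and no parity check of smaller weight covers it, so the code has locality exactly $8$; since $\lceil 10/8\rceil=2$, the Singleton-like bound (\ref{singleton-like-bound}) gives $d\le 13-10-2+2=3$, which is met, and $H$ defines an optimal ternary $(13,10,8)$ LRC.

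For the other four tuples I would delete the first $s$ columns of $H$ for $s=1,2,3,4$; these columns are $(1,1,1)^{T},(1,1,2)^{T},(1,2,1)^{T},(1,2,2)^{T}$, and each has all three entries nonzero. The punctured matrix still has rank $3$ with pairwise-independent nonzero columns, so the length-$(13-s)$, dimension-$(10-s)$ code it defines has $d\ge 3$, hence $d=3$; its dual is the simplex code punctured at those $s$ positions. The one step that needs care --- and the only real obstacle --- is pinning the locality down to exactly $8-s$ rather than something smaller: for this I would note that $v=(1,0,0)^{T}$ has nonzero inner product with every deleted column, so the punctured simplex code contains a word of weight $9-s$, while puncturing removes at most $s$ nonzero positions from any word, so the dual minimum weight is exactly $9-s=(8-s)+1$; hence every coordinate is covered by a weight-$(9-s)$ row of the punctured $H$ and none by a lighter parity check, the locality is exactly $8-s$, and since $\lceil(10-s)/(8-s)\rceil=2$ the bound again yields $d\le 3$. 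This delivers the optimal ternary $(12,9,7),(11,8,6),(10,7,5),(9,6,4)$ LRCs. The delicate point throughout is purely the choice of column ordering in (\ref{13-10-8}): it must keep all row weights equal and the dual minimum distance dropping by exactly one per deleted column, so that the realised locality tracks the designed $r$ at every step.
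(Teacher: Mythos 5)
Your proposal is correct and takes essentially the same route as the paper: the parameter classification is exactly the paper's Case 1/Case 2 analysis preceding the lemma, and the constructions are the paper's asserted ternary Hamming code and its shortened versions (deleting columns of the parity-check matrix). Your explicit check that the first four columns of (\ref{13-10-8}) have all entries nonzero---so each row weight drops to $9-s$, the dual minimum weight is exactly $9-s$, and the locality is exactly $8-s$ with the Singleton-like bound still met---merely fills in a verification the paper leaves implicit.
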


\section{$H'$ Corresponds to a $[n',1,n']\;(n'\ge 3)$ Code}

$H'$ is a full rank parity-check matrix of a ternary $[n',1,n']$ $(n'\ge 3)$ linear code. Then $H'$ is an $(n'-1)\times n'$ matrix, and $n' =d'=d= n - k - \left \lceil {k}/{r}\right  \rceil +2$. In Lemma \ref{lemma-binary-mds-lrc}, let $\gamma$ is the number of the columns covered by the supports of the deleted $\left \lceil {k}/{r}\right  \rceil-1$ locality-rows. Then
\begin{eqnarray}
&&\gamma = n - n' = k + \left \lceil {k}/{r}\right  \rceil -2\le (\lceil k/r\rceil -1)(r+1),\nonumber\\
\mbox{i.e.,}&&\qquad k-\lceil k/r\rceil r+r\le 1. \label{kr32}
\end{eqnarray}

\subsection{The Case of $r \mid k$}
If $r \mid k$, (\ref{kr32}) implies $r=1$ and $\gamma=2k-2$.
By Lemma \ref{necessary-condition}, $n = 2 l$. For any fixed $k-2$ locality-rows of $H$, let $H^*$ be obtained from $H$ by deleting these rows and all the columns they covered, where $H^*=H$ when $k =2$. Let $\mathcal{C}^*$ be the ternary $[n^*,k^*,d^*]$ linear code with the parity-check matrix $H^*$. Then $n^* = 2(l-k+2)$, $k^* = 2$ and $d^* = d = 2(l-k+1)$.
By the Plotkin bound with $q=3$ and $M^*=3^{k^*}$\cite{MacWilliams}
\begin{equation}\label{plotkin}
  d^* \le \frac{2n^*M^*}{3(M^*-1)},
\end{equation}
we have $l-k \le 2.$
By $n'=n-\gamma=2(l-k+1)\ge 3$, $l-k\ge 1$. Hence, we have $l=k+1$ or $l=k+2$.

For the case of $l=k+1$, $\mathcal{C}$ must have parameters $n=2k+2, k=k, r=1, d=4$, the following parity-check matrix gives optimal constructions.
\begin{equation}\label{construction-d-4-1}
    H = \left(
          \begin{array}{c}
            \quad I_{k+1} \;\otimes (1 \ 1)\\
            \hline
            \underbrace{(1 \cdots 1)}_{k+1} \otimes \;(0\  1)
             \\
          \end{array}
        \right)_{(k+2)\times (2k+2)},
\end{equation}
e.g., if $n=6,k=2,r=1$, it is
\begin{equation*}
H = \left(
  \begin{array}{cccccc}
    1 & 1 & 0 & 0 & 0 & 0  \\
    0 & 0 & 1 & 1 & 0 & 0  \\
    0 & 0 & 0 & 0 & 1 & 1  \\
    0 & 1 & 0 & 1 & 0 & 1
  \end{array}
\right).
\end{equation*}

For the case of $l=k+2$, $\mathcal{C}$ must have $n=2k+4, k=k, r=1, d=6$. The next result is needed for further studies.

\begin{Lemma}\label{lemma-l}
Let $H$ be an $(l+u) \times l(r+1)$ matrix where the first $l$ locality-rows have uniform weight $r+1$ and their supports are pairwise disjoint. If any 4 columns of $H$ are linearly independent, then
\begin{equation}
  l \le \frac{q^{u}-1}{(q-1) \cdot \binom{r+1}{2}}.
\end{equation}
\end{Lemma}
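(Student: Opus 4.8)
The plan is to turn the hypothesis into a packing statement about points of a projective space. First I would normalize: since any $4$ columns of $H$ are linearly independent, in particular no column is zero and no two columns are scalar multiples, and scaling columns changes none of this. The first $l$ locality-rows have pairwise disjoint supports of size $r+1$ and $H$ has $l(r+1)$ columns, so these supports partition the columns into blocks $B_1,\dots,B_l$ of size $r+1$, and each column has a unique nonzero entry among the first $l$ rows; scaling each column by the inverse of that entry makes the top part of $H$ equal to $I_l\otimes(1,\dots,1)$ without disturbing the $4$-wise independence. Writing the remaining $u$ coordinates of a column in block $j$ as a vector $v\in\mathbb F_q^{\,u}$, every column of $H$ now has the form $(e_j;v)$.

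Next I would attach to each block $j$ the set of ``difference directions''
$D_j=\{\langle v-v'\rangle : v,v'\text{ occur in }B_j,\ v\ne v'\}\subseteq\mathbb P^{u-1}(\mathbb F_q)$,
and prove two claims, both using only that any $4$ (hence any $\le 3$) columns of $H$ are independent. Claim (i): within a block the $r+1$ vectors $v$ are pairwise distinct (equal columns would give a $2$-term relation), and the $\binom{r+1}{2}$ differences they produce are pairwise non-proportional, so $|D_j|=\binom{r+1}{2}$. Indeed, if two distinct pairs in $B_j$ gave proportional differences $v_1-v_2=\lambda(v_3-v_4)$, then forming $(x_1-x_2)-\lambda(x_3-x_4)$ — after identifying any shared column — makes the top part cancel blockwise as $(1-1-\lambda+\lambda)e_j=0$ and the bottom $u$ coordinates cancel by hypothesis, yielding a nontrivial relation among $4$ distinct columns (disjoint pairs) or among $3$ distinct columns (overlapping pairs), a contradiction. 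Claim (ii): for $j\ne j'$ one has $D_j\cap D_{j'}=\emptyset$, because a proportionality $v-v'=\lambda(w-w')$ between a difference in $B_j$ and one in $B_{j'}$ makes $(x-x')-\lambda(y-y')$ vanish (top part cancels in each of the two blocks, bottom part by the proportionality), again a nontrivial $4$-column relation.

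Finally, Claims (i) and (ii) say that $D_1,\dots,D_l$ are pairwise disjoint subsets of $\mathbb P^{u-1}(\mathbb F_q)$, each of size exactly $\binom{r+1}{2}$, so $l\binom{r+1}{2}\le\bigl|\mathbb P^{u-1}(\mathbb F_q)\bigr|=\frac{q^{u}-1}{q-1}$, which is the asserted bound. I expect the only delicate point to be the case analysis inside Claim (i): one must split according to whether the two pairs within a block are disjoint or share exactly one vector, and in each case check that the coefficient vector of the resulting vanishing combination is genuinely nonzero (the coefficient on $x_1$ is always $1$). Everything else is a one-line linear-algebra computation in the $(e_j;v)$ decomposition together with the count of points of a projective space.
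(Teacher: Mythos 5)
Your proposal is correct and is essentially the paper's own argument: the paper likewise forms, for each pair of columns within a locality block, a combination that kills the top $l$ coordinates (your difference vectors after normalization), notes that $4$-wise column independence forces all $l\binom{r+1}{2}$ resulting $u$-dimensional vectors to be nonzero and pairwise non-proportional, and then counts projective points to get $l\binom{r+1}{2}\le\frac{q^u-1}{q-1}$. Your write-up merely makes explicit the normalization and the three-versus-four-column case analysis that the paper leaves implicit.
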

\begin{proof}
By the definition of $H$, each column has weight at most $u+1$, where the uppermost nonzero entry lies in the locality-row. Given a locality-row with weight $r+1$, it covers $r+1$ columns, any two of which could result in a column vector whose first $l$ entries are zeros by eliminating their uppermost nonzero entries. Thus, $l$ locality-rows result in $l{r+1\choose 2}$ column vectors in total, and in each of these column vectors, the first $l$ entries are zeros while the remaining $u$ entries are arbitrary. It is easy to see that the condition that any 4 columns of $H$ are linearly independent implies that all the above $l{r+1\choose 2}$ column vectors have to be pairwise independent, i.e., one vector is not the multiple of another one. Since the maximum number of the pairwise independent $u$-dimensional vectors is $\frac{q^u-1}{q-1}$,
\begin{equation}
  l \cdot \binom{r+1}{2} \le \frac{q^{u}-1}{q-1},
\end{equation}
which finishes the proof.
\end{proof}

In this case of $l=k+2$, $u=n-k-l=2$. By Lemma \ref{lemma-l}, $l=k+2 \le \frac{3^2-1}{3-1}=4.$ Hence $k=2$ since $k>r=1$, or $n=8, k=2, r=1, d=6$, and the parity-check matrix is
\begin{equation} \label{construction-8-2-6}
H = \left(
  \begin{array}{cccccccc}
    1 & 1 & 0 & 0 & 0 & 0 & 0 & 0  \\
    0 & 0 & 1 & 1 & 0 & 0 & 0 & 0  \\
    0 & 0 & 0 & 0 & 1 & 1 & 0 & 0  \\
    0 & 0 & 0 & 0 & 0 & 0 & 1 & 1  \\
    0 & 1 & 0 & 1 & 0 & 1 & 0 & 0  \\
    0 & 2 & 0 & 1 & 0 & 0 & 0 & 1
  \end{array}
\right).
\end{equation}

Combining these, we have the following lemma.
\begin{Lemma} \label{lemma-n-1-n-case1}
When $H'$ corresponds to a ternary $[n',1,n']$ $(n'\ge 3)$ code, if $r\mid k$, optimal LRCs must have parameters $(n=2k+2, k=k, r=1, d=4)$ or $(n=8, k=2, r=1, d=6)$.  The codes with the parity-check matrices in (\ref{construction-d-4-1}) and (\ref{construction-8-2-6}) give the optimal constructions, respectively.
\end{Lemma}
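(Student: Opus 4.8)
The plan is to follow the structure already laid out in the excerpt and simply assemble the pieces that have been proven immediately before the statement. The hypothesis is that $H'$ (the matrix obtained by puncturing $\lceil k/r\rceil-1$ locality-rows and the columns they cover, as in Lemma \ref{lemma-binary-mds-lrc}) corresponds to a ternary $[n',1,n']$ code with $n'\ge 3$, and additionally $r\mid k$. First I would invoke inequality (\ref{kr32}), namely $k-\lceil k/r\rceil r+r\le 1$; since $r\mid k$ forces $\lceil k/r\rceil r=k$, this collapses to $r\le 1$, hence $r=1$, and then $\gamma=k+\lceil k/r\rceil-2=2k-2$. Next, since $r=1$ divides $k$, Lemma \ref{necessary-condition} applies and gives that the locality-rows have pairwise disjoint supports each of weight exactly $r+1=2$, so $n=2l$.

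Then I would carry out the ``puncture the remaining locality-rows'' argument: fix any $k-2$ of the $l$ locality-rows (there are $l$ of them, $k-2$ of which are surplus beyond the $\lceil k/r\rceil-1=k-1$ used to form $H'$... actually one must be careful about how many are deleted), delete those rows and the columns they cover, obtaining $H^*$ with parameters $n^*=2(l-k+2)$, $k^*=2$, $d^*=d=2(l-k+1)$; the code $\mathcal C^*$ is optimal too. Applying the Plotkin bound (\ref{plotkin}) with $q=3$, $M^*=9$ yields $d^*\le 3n^*/4$, i.e. $2(l-k+1)\le 3(l-k+2)/2$, which simplifies to $l-k\le 2$; combined with $n'=2(l-k+1)\ge 3$ giving $l-k\ge 1$, we get $l\in\{k+1,k+2\}$.

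For $l=k+1$ the parameters are $(2k+2,k,1,d=4)$ and optimality plus the disjoint-support structure force (up to equivalence) the parity-check matrix (\ref{construction-d-4-1}); one checks directly that this matrix has every $d-1=3$ columns linearly independent, establishing it is a valid optimal construction. For $l=k+2$ the parameters are $(2k+4,k,1,d=6)$ with $u=n-k-l=2$, and Lemma \ref{lemma-l} (whose hypothesis ``any $4$ columns linearly independent'' follows from $d=6\ge 5$) gives $l=k+2\le (3^2-1)/((3-1)\binom{2}{2})=4$, hence $k\le 2$; since $k>r=1$ we get $k=2$, i.e.\ $(8,2,1,d=6)$, realized by (\ref{construction-8-2-6}), again verified by inspection. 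Assembling the two surviving cases yields exactly the statement of Lemma \ref{lemma-n-1-n-case1}.

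The main obstacle I anticipate is the ``puncture the remaining locality-rows'' step: one has to check that after deleting $\lceil k/r\rceil-1=k-1$ rows to form $H'$ and then additionally arguing about a further family of $k-2$ rows, the counting of $n^*$, $k^*$, $d^*$ is consistent, in particular that $H^*$ indeed still has full rank and that the punctured code remains an optimal LRC-like object to which the Plotkin bound can be applied — this uses the pairwise-disjoint support structure from Lemma \ref{necessary-condition} crucially and the arithmetic $n=2l$. A secondary (routine but not entirely trivial) point is verifying the explicit matrices (\ref{construction-d-4-1}) and (\ref{construction-8-2-6}) genuinely meet the bound, i.e.\ that no $d-1$ columns are dependent; for the $8\times 2$ case this requires checking all $4$-subsets of the $8$ columns, which is where the second row of the lower block (with the entry $2$) is needed to break a dependency that the all-ones choice would create.
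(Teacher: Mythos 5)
Your proposal is correct and follows essentially the same route as the paper: (\ref{kr32}) forces $r=1$, Lemma \ref{necessary-condition} gives disjoint weight-2 locality-rows and $n=2l$, deleting $k-2$ locality-rows and applying the Plotkin bound (\ref{plotkin}) yields $l-k\le 2$ while $n'\ge 3$ gives $l-k\ge 1$, and Lemma \ref{lemma-l} with $u=2$ eliminates all but $k=2$ in the $l=k+2$ case, leaving exactly the constructions (\ref{construction-d-4-1}) and (\ref{construction-8-2-6}). The cautionary points you raise (full rank of $H^*$ and direct verification of the explicit matrices) are glossed over in the paper as well, and are routine given the disjoint-support structure.
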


\subsection{The Case of $r \nmid k$}

If $r \nmid k$, then $r\ge 2$. Let $k = sr+t$, where $1\le t\le r-1$. By substituting $k= sr+t$ into (\ref{kr32}), we have $t=1$, which implies that $k=sr+1$ and $\gamma=k+\lceil k/r\rceil-2=s(r+1)$.
Since we delete $\lceil k/r\rceil -1= s\ge 1$ locality-rows each of which has weight at most $r+1$, these deleted $s$ rows must have disjoint support and have uniform weight exactly $r+1$. Since the $s$ locality-rows are arbitrarily chosen, we conclude that all locality-rows of $H$ have uniform weight $r+1$. Next, we will break it into two cases: $s=1$ and $s\ge 2$.

{\bf Case \textsl{s}}$\,\mathbf{= 1}$: $k=r+1$, $\lceil k/r\rceil=2$, $d=n-k=n'\ge 3$. By $r\ge 2$, $k\ge 3$.
A $q$-ary $[n,k,d]$ code is called a \emph{near MDS} code if $d=d_1=n-k$ and $d_i=n-k+i$, $i=2,3,\ldots,k$, where $d_i$ denote the $i$-th \emph{generalized Hamming weight}  of the code \cite{Huffman}.
Let $\mathcal{C}^{\bot}$ be the dual code of $\mathcal{C}$ and $d_i^{\bot}$ be the $i$-th generalized Hamming weight of $\mathcal{C}^{\bot}$. Since all locality-rows of $H$ have uniform weight $r+1$, $d_1^{\bot} = r+1$. Since $\mathcal{C}$ has $d= n-k-\lceil k/r \rceil +2$, by the duality of the generalized Hamming weights, it follows that for $2 = \lceil k/r\rceil \le i \le n-k$,  $ d_i^{\bot} = k+i$ \cite{Huffman}. Hence $\mathcal{C}^{\bot}$ is a near MDS code, which indicates that $\mathcal{C}$ is also a near MDS code \cite{nearMDS}.
Since $k\ge 3$, $n-k\ge 3$ and $q=3$, it follows that  \cite[Theorem 3.5]{nearMDS}
$$3\le k\le 2q=6, \; 3\le n-k\le 2q=6,\;\mbox{ and }\;n\le 12.$$ Thus the possible ternary near MDS codes are respectively [12,6,6], [11,6,5], [11,5,6], [10,6,4], [10,5,5],[10,4,6], [9,6,3], [9,5,4], [9,4,5], [9,3,6], [8,5,3],[8,4,4], [8,3,5],  [7,4,3], [7,3,4],  [6,3,3], 16 ternary $[n,k,d]$ linear codes in total.

It can be easily verified, e.g., by using the MAGMA software, that there do exist 16 optimal LRCs with all the above parameters, where the locality $r = k-1$. For example, the ternary $[12,6,6]$ LRC with $r=5$ can be obtained by extending the  ternary $[11,6,5]$ quadratic residue code, whose parity-check matrix could be
\begin{equation}
H = \left(
  \begin{array}{cccccccccccc}
    1 & 2 & 2 & 1 & 2 & 1 & 0 & 0 & 0 & 0 & 0 & 0 \\
    0 & 0 & 0 & 0 & 0 & 0 & 1 & 2 & 2 & 2 & 1 & 1 \\
    0 & 0 & 1 & 1 & 1 & 1 & 0 & 0 & 0 & 1 & 1 & 0 \\
    0 & 1 & 0 & 2 & 1 & 0 & 0 & 0 & 0 & 1 & 2 & 2  \\
    0 & 0 & 0 & 2 & 0 & 1 & 0 & 1 & 0 & 2 & 1 & 2  \\
    0 & 0 & 0 & 1 & 2 & 1 & 0 & 0 & 1 & 0 & 2 & 2
  \end{array}
\right).
\end{equation}
All the 15 other optimal LRCs can be obtained from puncturing or shortening this $[12,6,6]$ code. Moreover, their localities also satisfy $r = k-1$.

\begin{Lemma} \label{lemma-n-1-n-case2}
When $H'$ corresponds to an $[n',1,n']$ $(n'\ge 3)$ ternary linear code, if $r\nmid k$ and $k=r+1$, there are exact 16 $[n,k,d]$ optimal LRCs, where $3\le k \le 6$, $3 \le n-k \le 6$,  $6\le n \le 12$, and $d= n-k$. Moreover, all these LRCs are  ternary near MDS codes.
\end{Lemma}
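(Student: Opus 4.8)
The plan is to show that, under these hypotheses, $\mathcal{C}$ must be a near-MDS code, then read off the admissible parameters from the known length bounds for ternary near-MDS codes, and finally exhibit codes attaining each surviving parameter set. First I would collect what the case assumptions already provide: since $H'$ is a full-rank parity-check matrix of a ternary $[n',1,n']$ code and $r\nmid k$, the discussion preceding the statement forces $k=sr+1$ with $s=1$, hence $k=r+1$, $\lceil k/r\rceil=2$, and $d=n-k=n'\ge 3$; together with $r\ge 2$ this gives $k\ge 3$ and $n-k\ge 3$. It also follows, as already argued, that every locality-row of $H$ has weight exactly $r+1=k$.

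Next I would prove that $\mathcal{C}$ is near-MDS. Optimality together with the Singleton-like bound (\ref{singleton-like-bound}), specialised to $\lceil k/r\rceil=2$, gives $d=n-k$, so $\mathcal{C}$ has Singleton defect $1$. Writing $d_i$ and $d_i^{\bot}$ for the generalized Hamming weight hierarchies of $\mathcal{C}$ and $\mathcal{C}^{\bot}$, I would feed $d_1=n-k$ and the uniform locality-row weight $r+1=k$ into Wei's duality between the two hierarchies (cf.\ \cite{Huffman}): this pins $d_1^{\bot}=k$ and $d_i^{\bot}=k+i$ for $2\le i\le n-k$, which is exactly the defining property of a near-MDS $[n,n-k]$ code; hence $\mathcal{C}^{\bot}$, and therefore $\mathcal{C}$ as well since the near-MDS property is self-dual, is near-MDS \cite{nearMDS}.

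Then I would invoke the classification of $q$-ary near-MDS codes. For a ternary near-MDS $[n,k,d]$ code with $k\ge 3$ and $n-k\ge 3$, the length restrictions of \cite[Theorem~3.5]{nearMDS} read $3\le k\le 2q=6$, $3\le n-k\le 2q=6$ and $n\le 12$. Enumerating the triples $(n,k,d)$ with $d=n-k$ subject to these constraints --- for each $k\in\{3,4,5,6\}$ and each $m:=n-k\in\{3,4,5,6\}$ with $k+m\le 12$ there is exactly one such triple --- produces precisely the $16$ listed parameter sets $[12,6,6],[11,6,5],[11,5,6],\ldots,[6,3,3]$; and since the case hypothesis is $k=r+1$, every such LRC automatically has locality $r=k-1$.

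Finally I would establish that all $16$ parameter sets are attained. I would take a ternary $[12,6,6]$ near-MDS code, realised as the extension of the $[11,6,5]$ ternary quadratic-residue code, write down an explicit parity-check matrix, and check --- by direct computation, e.g.\ with MAGMA --- that it is near-MDS and that every coordinate has locality $5=k-1$. The remaining $15$ parameter sets then arise by repeatedly puncturing and shortening this seed code, using that puncturing and shortening a near-MDS code again yield near-MDS codes with the next-smaller parameters, and verifying at each step that locality $r=k-1$ is preserved. I expect this last step to be the main obstacle: there is no single uniform algebraic construction covering all $16$ cases, so one must produce a concrete seed and check that the whole puncture/shortening tree stays near-MDS with all symbols of locality $k-1$. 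The other delicate points --- justifying $d_1^{\bot}=k$ from the locality-row structure together with the Wei-duality bookkeeping that fixes the rest of the hierarchy, and invoking the exact $q$-ary near-MDS length bound --- are comparatively routine, since both can be quoted.
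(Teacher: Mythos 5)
Your proposal follows essentially the same route as the paper: derive $d=n-k$, $k=r+1\ge 3$ and uniform locality-row weight $r+1=k$, use Wei duality of generalized Hamming weights to conclude $\mathcal{C}^{\bot}$ (hence $\mathcal{C}$) is near-MDS, apply the length bound of \cite[Theorem~3.5]{nearMDS} to enumerate the $16$ parameter sets, and realize them from the extended $[11,6,5]$ quadratic-residue code by puncturing/shortening with computer verification of locality $r=k-1$. No substantive differences; the argument is correct.
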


\medskip
{\bf Case \textsl{s}}$\,\mathbf{\ge 2}$:  Since the deleted $s$ locality-rows are arbitrarily chosen, we conclude that all locality-rows of $H$ are pairwise disjoint, which implies $(r+1)\mid n$ or $n = l(r+1)$, then $n' = n-\gamma=(l-s)(r+1)$. In Lemma \ref{lemma-binary-mds-lrc}, for any fixed $s-1$ locality-rows of $H$, let $H^*$ be obtained from $H$ by deleting these rows and all the corresponding columns. Let $\mathcal{C}^*$ be the ternary $[n^*,k^*,d^*]$ linear code with the parity-check matrix $H^*$. Then
\begin{eqnarray*}
&&n^* = (l-s+1)(r+1),\quad k^* = r+1, \\
&&d^* = n-k-\left \lceil {k}/{r}\right \rceil+2 = (l-s)(r+1).
\end{eqnarray*}
By the Plotkin bound (\ref{plotkin}),
\begin{equation}
    (l-s)(r+1) \le   2 \cdot \frac{(l-s+1)(r+1)\cdot 3^{r+1}}{3 \cdot (3^{r+1}-1)},
\end{equation}
or $l-s \le {2 \cdot 3^{r}}/{(3^{r}-1)}$.
Since $r\ge 2$, $l-s = 1 \mbox{ or } l-s=2.$

\medskip
\subsubsection{$l-s = 1$ }
$n^* = 2(r+1), \ k^* = r+1, \ d^* =  r+1$.

Let $M_q(n,d)$ denote the maximum number of codewords in a $q$-ary linear code with length $n$ and minimum distance $d$. By shortening techniques, it is easy to see that \cite{MacWilliams}
\begin{equation}\label{M-q}
M_q(n,d) \le q M_q(n\!-\!1,d) \mbox{ or } M_3(n,d) \le 3 M_3(n\!-\!1,d).
\end{equation}
When $d$ is even, by (\ref{plotkin}),
$M_3({3d}/{2} - 1,d) \le 3d/2.$
Hence,
\begin{equation}\label{M-3d/2}
  M_3({3d}/{2},d) \le 3 \cdot M_3({3d}/{2} - 1,d) \le {9d}/{2}.
\end{equation}

If $r$ is odd, then $d^* = r+1$ is even. By (\ref{M-q}) and (\ref{M-3d/2}),
\begin{eqnarray}
M_3(2d^*,d^*) &\le& 3 \cdot M_3(2d^*-1,d^*)  \nonumber\\
    &\le& \cdots \nonumber\\
    &\le& 3^{{d^*}/{2}}\cdot M_3({3d^*}/{2},d^*) \nonumber\\
    &\le& 3^{{d^*}/{2}} \cdot {9d^*}/{2}.
\end{eqnarray}
Hence, $3^{r+1} \le 3^{\frac{r+1}{2}} \cdot \frac{9(r+1)}{2},$ which implies that $r \le 5$. Since $r$ is odd, we have $r=3$ or $r=5$.
When $r=3$, $\mathcal{C}$ must have parameters $n=l(r+1)=4l$, $k=sr+1=(l-1)*3+1=3l-2$, $r=3$, $d=4$, where $l\ge 3$ by $s\ge 2$, and  its parity-check matrix can be
\begin{equation}\label{construction-d-4-2}
    H = \left(
          \begin{array}{c}
            I_l \otimes (1 \ 1 \ 1 \ 1)\\
            \hline
            \underbrace{(1 \ 1 \cdots 1)}_{l} \otimes \left(
                                                        \begin{array}{cccc}
                                                          0 & 0 & 1 & 1 \\
                                                          0 & 1 & 0 & 1 \\
                                                        \end{array}
                                                      \right)
             \\
          \end{array}
        \right)_{(l+2)\times 4l},
\end{equation}
e.g., if $n=12,k=7,r=3$, it is
\begin{equation*}
H = \left(
  \begin{array}{cccccccccccc}
    1 & 1 & 1 & 1 & 0 & 0 & 0 & 0 & 0 & 0 & 0 & 0  \\
    0 & 0 & 0 & 0 & 1 & 1 & 1 & 1 & 0 & 0 & 0 & 0  \\
    0 & 0 & 0 & 0 & 0 & 0 & 0 & 0 & 1 & 1 & 1 & 1  \\
    0 & 0 & 1 & 1 & 0 & 0 & 1 & 1 & 0 & 0 & 1 & 1  \\
    0 & 1 & 0 & 1 & 0 & 1 & 0 & 1 & 0 & 1 & 0 & 1
  \end{array}
\right).
\end{equation*}
When $r=5$, $\mathcal{C}$ must have parameters $n=l(r+1)=6l$, $k=sr+1=(l-1)*5+1=5l-4$, $r=5$, $d=6$. By Lemma \ref{lemma-l}, $l \le \frac{3^4-1}{(3-1){5+1\choose 2}}=\frac{8}{3}$. Since $l\ge 3$ by $s\ge 2$, there is no optimal LRC in this case.

\smallskip

If $r$ is even, then $d^* = r+1$ is odd. By the bound (\ref{plotkin}),
$
  M_3({(3d^*-1)}/{2} ,d^*) \le 3d^*.
$
By (\ref{M-q}),
\begin{eqnarray}
M_3(2d^*,d^*) &\le& 3 \cdot M_3(2d^*-1,d^*)  \nonumber\\
    &\le& \cdots \nonumber\\
    &\le& 3^{\frac{d^*+1}{2}}\cdot M_3((3d^*-1)/2,d^*) \nonumber\\
    &\le& 3^{\frac{d^*+1}{2}} \cdot 3d^*.
\end{eqnarray}
Hence, $3^{r+1} \le 3^{\frac{r+2}{2}} \cdot 3d,$ which implies that $r \le 5$. Since $r$ is even, we have $r=2$ or $r=4$.
When $r=2$, $\mathcal{C}$ must have parameters $n=l(r+1)=3l$, $k=sr+1=(l-1)*2+1=2l-1$, $r=2$, $d=3$, where $l\ge 3$ by $s\ge 2$, and  its parity-check matrix can be
\begin{equation}\label{construction-d-4-3}
    H = \left(
          \begin{array}{c}
            I_l \otimes (1 \ 1 \ 1 )\\
            \hline
            \underbrace{(1 \ 1 \cdots 1)}_{l} \otimes (0 \ 1 \ 2 )
             \\
          \end{array}
        \right)_{(l+1)\times 3l},
\end{equation}
e.g., if $n=9,k=5,r=2$, it is
\begin{equation*}
H = \left(
  \begin{array}{ccccccccc}
    1 & 1 & 1  & 0 & 0 & 0 & 0 & 0 & 0  \\
    0 & 0 & 0  & 1 & 1 & 1 & 0 & 0 & 0  \\
    0 & 0 & 0  & 0 & 0 & 0 & 1 & 1 & 1   \\
    0 & 1 & 2  & 0 & 1 & 2 & 0 & 1 & 2
  \end{array}
\right).
\end{equation*}
When $r=4$, $\mathcal{C}$ must have parameters $n=l(r+1)=5l$, $k=sr+1=(l-1)*4+1=4l-3$, $r=4$, $d=5$.
By Lemma \ref{lemma-l}, $l \le 1$. So there is no optimal LRC in this case.

\bigskip
\subsubsection{$l-s = 2$ }
$n^* = 3(r+1), \ k^* = r+1, \ d^* =  2(r+1)$.

By (\ref{M-3d/2}), we have
$3^{r+1} \le \frac{9}{2} \cdot 2(r+1),$
or $r \le 2$. Since $r\ge 2$, $r=2$, which implies that $\mathcal{C}$ has parameters $n=3l$, $k=2l-3$, $r=2$, $d=6$.
By Lemma \ref{lemma-l},
$
  l \le  \frac{3^3-1}{(3-1)\cdot \binom{3}{2}}  = \frac{13}{3}.
$
Thus $l=4$ by $s\ge2$. Hence, only the $(12,5,2,6)$ code is possible. The following parity-check matrix gives an optimal $(12,5,2)$ LRC with $d = 6$.
\begin{equation}
\label{construction-d-4-4}
H = \left(
  \begin{array}{cccccccccccc}
    1 & 1 & 1 & 0 & 0 & 0 & 0 & 0 & 0 & 0 & 0 & 0 \\
    0 & 0 & 0 & 1 & 1 & 1 & 0 & 0 & 0 & 0 & 0 & 0   \\
    0 & 0 & 0 & 0 & 0 & 0 & 1 & 1 & 1 & 0 & 0 & 0  \\
    0 & 0 & 0 & 0 & 0 & 0 & 0 & 0 & 0 & 1 & 1 & 1   \\
    0 & 2 & 1 & 0 & 2 & 1 & 0 & 2 & 1 & 0 & 0 & 0   \\
    0 & 0 & 2 & 0 & 2 & 2 & 0 & 2 & 0 & 0 & 1 & 0   \\
    0 & 1 & 1 & 0 & 1 & 0 & 0 & 2 & 0 & 0 & 0 & 1
  \end{array}
\right).
\end{equation}

Combining these, we have the following lemma.
\begin{Lemma} \label{lemma-n-1-n-case3}
When $H'$ corresponds to a ternary $[n',1,n']$ $(n' \ge 3)$ code, if $r\nmid k$ and $k>r+1$, optimal LRCs must have parameters $(n=4l, k=3l-2, r = 3, d = 4)$ or $(n=3l, k=2l-1, r =2,d= 3)$ or  $(n=12, k=5, r=2, d=6)$, where $l\ge 3$.
The codes with the parity-check matrices (\ref{construction-d-4-2}), (\ref{construction-d-4-3}) and (\ref{construction-d-4-4}) give the optimal constructions, respectively.
\end{Lemma}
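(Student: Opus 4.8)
The plan is to run the same parity-check matrix analysis that gave Lemmas \ref{lemma-n-n-1-2}--\ref{lemma-n-1-n-case2}, now specialized to the hypotheses $r\nmid k$ and $k>r+1$ in the $[n',1,n']$ case. First I would write $k=sr+t$ with $1\le t\le r-1$, so that $\lceil k/r\rceil=s+1$ and $k-\lceil k/r\rceil r+r=t$; feeding this into (\ref{kr32}) forces $t=1$, hence $k=sr+1$, $\lceil k/r\rceil=s+1$, and $\gamma=k+\lceil k/r\rceil-2=s(r+1)$. The $s=\lceil k/r\rceil-1$ deleted locality-rows thus cover exactly $s(r+1)$ coordinates while each has weight at most $r+1$, so they are pairwise disjoint and each has weight exactly $r+1$; since the deleted set is arbitrary, every locality-row has weight $r+1$, and because $k>r+1$ forces $s\ge 2$, all locality-rows are globally pairwise disjoint. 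Hence $(r+1)\mid n$, say $n=l(r+1)$ with $l\ge s+2\ge 4$, and $n'=(l-s)(r+1)$.

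Next I would pin down $l-s$. Deleting any $s-1$ locality-rows and the columns they cover yields a ternary code $\mathcal{C}^*$ with $n^*=(l-s+1)(r+1)$, $k^*=r+1$, $d^*=(l-s)(r+1)$; the Plotkin bound (\ref{plotkin}) then gives $l-s\le 2\cdot 3^r/(3^r-1)$, so $l-s\in\{1,2\}$. In the case $l-s=1$ we have $n^*=2(r+1)$, $d^*=r+1$, and I would telescope the shortening inequality (\ref{M-q}) down to the Plotkin-type estimate (\ref{M-3d/2}) (when $d^*$ is even) or its odd analogue, obtaining $3^{r+1}\le 3^{\lceil (r+1)/2\rceil}\cdot O(r)$ and hence $r\le 5$. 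This leaves $r\in\{2,3,4,5\}$: $r=3$ gives the family $n=4l,k=3l-2,d=4$ (construction (\ref{construction-d-4-2})), $r=2$ gives $n=3l,k=2l-1,d=3$ (construction (\ref{construction-d-4-3})), while for $r\in\{4,5\}$ one has $d=r+1\ge 5$, so any four columns of $H$ are independent, and Lemma \ref{lemma-l} with $u=n-k-l=r-1$ forces $l\le (3^{r-1}-1)/\bigl((3-1)\binom{r+1}{2}\bigr)<3$, contradicting $l\ge 3$. In the case $l-s=2$ we have $n^*=3(r+1)$, $d^*=2(r+1)$, and (\ref{M-3d/2}) gives $3^{r+1}\le 9(r+1)$, so $r=2$; then $n=3l$, $k=2l-3$, $d=6$, and Lemma \ref{lemma-l} with $u=3$ gives $l\le 13/3$, so combined with $l\ge 4$ only $l=4$ survives, i.e.\ the single code $(12,5,2)$ with $d=6$ (construction (\ref{construction-d-4-4})).

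To finish I would verify that the three displayed parity-check matrices are genuinely optimal: for (\ref{construction-d-4-2}) and (\ref{construction-d-4-3}) the top block has all-weight-$(r+1)$ pairwise-disjoint rows, so the locality is $r$, and a short column-independence check (any three, resp.\ two, columns independent) gives the claimed $d$; for (\ref{construction-d-4-4}) one checks directly (e.g.\ by MAGMA) that the $(12,5)$ code has minimum distance exactly $6$ and locality $2$. Combining the parameter restrictions above with these constructions yields exactly the three stated families.

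I expect the main obstacle to be the $l-s=1$ branch with $d^*=r+1$ odd (i.e.\ $r$ even): the telescoping of (\ref{M-q}) against the Plotkin bound must be arranged so that the power of $3$ and the linear factor in $r$ combine to give precisely $r\le 5$ rather than a weaker bound, and the two residual values $r\in\{4,5\}$ must then be eliminated by the sharper column-count of Lemma \ref{lemma-l} with the correct value of $u=n-k-l$ in each subcase. Getting this bookkeeping right, and confirming optimality (not merely the correct shape) of the explicit matrices, is where the real work lies.
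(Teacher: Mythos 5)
Your proposal follows essentially the same route as the paper: reduce (\ref{kr32}) to $t=1$, deduce disjoint weight-$(r+1)$ locality-rows with $s\ge 2$, use the Plotkin bound on the punctured code $\mathcal{C}^*$ to get $l-s\in\{1,2\}$, telescope (\ref{M-q}) against (\ref{M-3d/2}) (and its odd analogue) to force $r\le 5$, eliminate $r\in\{4,5\}$ and bound $l$ in the $l-s=2$ branch via Lemma \ref{lemma-l}, and exhibit the matrices (\ref{construction-d-4-2}), (\ref{construction-d-4-3}), (\ref{construction-d-4-4}); the bookkeeping you flag ($u=r-1$ when $l-s=1$, $u=3$ when $l-s=2$) matches the paper's computations. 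The only blemish is the passing claim ``$l\ge s+2\ge 4$'' before $l-s$ is determined (at that stage one only has $l\ge s+1\ge 3$), but since you later use the correct bounds $l\ge 3$ and $l\ge 4$ in the respective branches, this does not affect the argument.
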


\bigskip
Combining Lemmas \ref{lemma-n-n-1-2}, \ref{lemma-423}, \ref{lemma-n-1-n-case1}, \ref{lemma-n-1-n-case2} and \ref{lemma-n-1-n-case3}, the main theorem of this paper follows.

\section{Conclusions}
In this paper, we study the constructions of optimal ternary LRCs based on a parity-check matrix approach. It is proved that there are only $8$ classes of possible parameters that optimal ternary LRCs can achieve. The minimum distance of optimal ternary LRCs
 can only  be 2, 3, 4, 5, 6. Moreover, we propose constructions of optimal ternary LRCs for all these $8$ possible classes of   parameters.

\section*{Acknowledgment}
This research is supported in part by the National Natural Science Foundation of China under grant No. 61371078, and the R\&D Program of Shenzhen under grant Nos. JCYJ20140509172959977, JSGG20150512162853495, ZDSYS20140509172959989, JCYJ20160331184440545.



%

\end{document}